\newcommand{\R}{{\mathbb R}}
\newcommand{\dv}{{\hat{d}}}
\newcommand{\chim}{{\bar\chi}}
\newcommand{\db}{{\bar{d}}}
\newcommand{\bd}{{\bar{\delta}}}
\newcommand{\dhat}{{\hat{\Delta}}}
\newcommand{\pmax}{{p_{\max}}}
\newcommand{\nmeas}{n_{{\rm meas}}}
\newcommand{\npcs}{n_{{\rm pcs}}}
\newcommand{\nrdc}{n_{{\rm rdc}}}
\newcommand{\Rnmeas}{\R^{\nmeas}}
\newtheorem{thm}{Theorem}[section]
\newtheorem{prp}[thm]{Proposition}
\begin{document}
\title[Looking for central tendencies]{Looking for central tendencies in the conformational freedom of proteins using NMR measurements}
\author[F. Clarelli]{Fabrizio Clarelli$^{(1)}$}
\author[L. Sgheri]{Luca Sgheri$^{(1,*)}$}

\address{$^{(1)}$Istituto per le Applicazioni del Calcolo (CNR), sede di Firenze, Via Madonna del Piano, 10, 50019 Sesto Fiorentino (FI), Italy}

\email{$^*$l.sgheri@iac.cnr.it}
\begin{abstract}
We study the conformational freedom of a protein made by two rigid domains connected by a flexible linker. The conformational freedom is represented as an unknown probability distribution on the space of allowed states. A new algorithm for the calculation of the Maximum Allowable Probability is proposed, which can be extended to any type of measurements. In this paper we use Pseudo Contact Shifts and Residual Dipolar Coupling. We reconstruct a single  central tendency in the distribution and discuss in depth the results.
\end{abstract}

\maketitle

\section{Introduction}
Flexibility is a key point in the functioning of macromolecules such as proteins \cite{FSW,HM}. One of the few techniques which permit to extract information about the conformational freedom of proteins in physiological condition is NMR spectroscopy. In the last decade a vast literature has flourished on this topic, we refer the reader to the two recent review papers \cite{SB,RSPL} for a general discussion on the available techniques.

Since the temporal scale of the fluctuations in the fold of a protein is several orders smaller than the time needed to take NMR measurements \cite{PM}, information about the flexibility can be only be recovered as a probability function on the set of allowed states. This set can be parametrized using for instance Cartesian coordinates of atoms involving measurements, dihedral angles of the backbone or Euler transformations determining the position of rigid protein domains.

The recovery of this probability distribution is an under-determined ill-posed problem independently of the chosen parameters. The number of constraints is in fact far too small to determine a unique solution, except in trivial cases. Without further assumptions, any set of values of the parameters which is compatible with the measurements may be seen as a solution, and in principle there is no way of telling if a solution is better than another.

The lack of uniqueness, combined with the heterogeneity of the NMR measurement scenarios, led to a plethora of different techniques with different acronyms all trying to determine the {\emph{"best"}} solution. The two cited review papers try to classify (each from a different point of view) these approaches.

From the mathematical point of view it may be observed that there are two limit cases for the solutions.

\smallskip\noindent
$\bullet$ A first approach is to find the solution which minimizes the additional hypotheses on the data, thus using either the Maximum Entropy Principle (MEP) \cite{J}, or the Kullback-Leibler divergence \cite{KL}, which is the relative entropy. The MEP solution maximizes the uncertainty on the data, so each feature shown by the MEP solution is relevant. On the other hand the MEP solution is in general a continuous probability distribution, so that a large number of states is normally needed in order to approximate it \cite{CCV}. The large number of variables involved raises not easy computational issues.

\smallskip\noindent
$\bullet$ A second approach is to find if the measurements carry some preference for certain states. The Maximal Allowable Probability (MAP) is the largest weight of a given state in a probability distribution which is a solution. As a function of the state, the MAP is not a solution but a sharp bound from above. The zones with a large MAP are the positions favoured by the data. On the other hand it is not possible to establish to what extent the true distribution shows these asymmetries. We can only say that the largest asymmetries should be in favour of the zones indicated by the MAP technique.

\smallskip
Both approaches are equally able to recover the unknown probability distribution in the extremal cases. When there is very little conformational freedom the physical situation can be thought as a series of oscillations around a central state. In this case the MEP solution tends to a Dirac function of the central state, while the MAP estimate tends to be $1$ for the central state and $0$ for the other states. On the other hand when there is a very large conformational freedom the MEP solution tends to the uniform distribution in the space metric and the spread of the MAP estimate is minimal.

In the in-between cases the two approaches diverge. The MEP solution is obtained as the solution with the minimal spread in the probability density. On the other hand the MAP estimate for each state is obtained via solutions with the largest possible spread between the probability of the estimated state and the probabilities of the states needed to complete the solution. Since the problem is underdetermined, both approaches are consistent with the data. They focus on different aspects of the problem and they are in some sense complementary. For a deeper discussion of this topic, we refer the reader to \cite{RSPL}.

\smallskip
In this paper we show a development of the MAP approach which permits the combination of different NMR constraints. The MAP approach has been inspired by \cite{BDGKLPPPZ}, and the rigorous mathematical definition of the MAP has been given in \cite{GLS}, though different names have been subsequently used for this same bound of the probability. A geometrical algorithm has been developed in \cite{LLPS} to calculate the MAP estimate when only Residual Dipolar Coupling (RDC) \cite{TFKP} are considered, using the linearity properties of these measurements. Residual Dipolar Coupling and Pseudo Contact Shift (PCS) \cite{KG}, which are frequently obtained together, have been used to analyse the conformational freedom of calmodulin in \cite{BGLPPSY}, using a complicated and time-consuming numerical procedure. The main difficulty is that PCS (as is the case for most sets of data) do not possess the linearity properties of RDC, which permits working on averaged tensors.

The Maximum Occurrence algorithm \cite{BGLPPPRS} uses a predetermined pool of conformers to calculate the maximal probability. The choice of a finite number of conformers simplifies the algorithm and reduces the time needed for the calculations. With this choice the positions which would cause physical violations of the atoms of the two domains may be directly eliminated from the sample.

The SES (Sparse Ensemble Selection) method has been developed in \cite{BCSSNT}, and is focused on recovering a small set of conformers with large probabilities. A recent paper compares the two approaches and shows the information content of RDC and PCS \cite{ABFLPRS}.

In this paper we extend the geometrical algorithm of \cite{LLPS} to the case of PCS. Indeed, since we drop the linearity requirement only fulfilled by paramagnetic RDC, the approach can be extended to any set of measurements.

\section{Theory}
We use the calmodulin measurement scenario \cite{BGLPPSY} as our test case.
Calmodulin is a protein made by two rigid domains (the N and C terminals) connected by a flexible linker, see figure~\ref{fig:pdb} in section~\ref{se:imple}. A paramagnetic ion may be inserted in the binding sites of the N terminal, which is also called the metal domain. We can then measure NMR data for atoms belonging to both the N and C terminals.

The RDC measurements \cite{TFKP} are defined by
\begin{equation}\label{eq:rdc1}
\delta_{rdc,j}=\frac{c_{rdc}}{\|P_j\|^5}P_j^t\chi P_j
\end{equation}
where $P_j=P_{j_1}-P_{j_2}$ is the vector connecting selected pairs $j_1$ and $j_2$ of chemically linked atoms and $c_{rdc}$ is a constant. The paramagnetic tensor $\chi$ is a symmetric and null-trace $3\times 3$ matrix, thus depending on $5$ coefficients. Since the atoms are chemically bound, their distance may be considered fixed, so that $\|P_j\|$ is constant, and the only dependence is on the orientation of the vector $P_j$. If the atoms belongs to the same domain as the metal, the tensor $\chi$ and the vector $P_j$ belong to the same rigid structure. The RDC of the metal domain can be used to fit the numerical values of the $\chi$ tensor using (\ref{eq:rdc1}).

The PCS measurements \cite{KG} are given by 
\begin{equation}\label{eq:pcs1}
\delta_{pcs,j}=\frac{c_{pcs}}{\|P_j\|^5}P_j^t\chi P_j,
\end{equation}
a formula very similar to (\ref{eq:rdc1}), with a different constant. In this case $P_j$ is however the vector connecting the metal and selected atoms $j$. When the atom belongs to the metal domain there is no difference between RDC and PCS. In fact the atom does not move with respect to the metal ion, so $P_j$ is fixed. Indeed RDC and PCS of atoms of the metal domain can be coupled to obtain a better fit for the paramagnetic tensor $\chi$ (and possibly for the location of the paramagnetic metal ion), see for instance \cite{LPS}. From now on we suppose that this is the case, so that the paramagnetic tensor is known and only RDC and PCS for atoms belonging to the C terminal are considered. Since the C terminal moves with respect to the metal ion, the NMR measurements are averages of different states of the molecule, so that we may speak about \emph{mean} PCS or RDC.

The RDC and PCS are in fact the average of the values obtained for different positions of the C terminal (also called conformers). Each conformer is identified by an Euler transformation $E\equiv(R,t)$, where $R$ is a rotation and $t$ a translation. Note however that the $P_j$ of formulas (\ref{eq:rdc1}) for RDC are difference of coordinates, so that the translations cancel, and we have
\begin{equation}\label{eq:eul1}
E(P_j)=R(P_j-t){\rm\ for\ PCS},\quad E(P_j)=RP_j{\rm\ for\ RDC}.
\end{equation}
Note also that $\|E(P_j)\|=\|P_j\|$ does not depend on $(R,t)$ in the case of RDC. Because of the linker we can always suppose $t_{min}\le \|t\|\le t_{max}$, so that the space of allowed Euler transformations is compact.

Let $D$ be the space of probability distributions on this compact space. Each $d\in D$ is identified by the probability density $p(R,t)\ge 0$, such that $\int_{R,t} p(R,t)dRdt=1$. Then
\begin{equation}\label{eq:mrdc1}
\bd_{rdc,j}=\frac{c_{rdc}}{\|P_j\|^5}\int_{R,t} p(R,t) (R P_j)^t\chi (R P_j)dRdt.
\end{equation}
Since $p$ does not depend on $t$ for the RDC, we have $p=p(R)$ and $\int_{R} p(R)dR=1$. Using the \emph{mean paramagnetic tensor}
\begin{equation}\label{eq:chim1}
\chim=\int_{R} p(R) R^t\chi R dRdt.
\end{equation}
equation (\ref{eq:mrdc1}) becomes:
\begin{equation}\label{eq:rdc3}
\bd_{rdc,j}=\frac{c_{rdc}}{\|P_j\|^5}P_j^t\chim P_j.
\end{equation}

The same technique cannot be used for PCS because of the term $E(P_j)=R_i(P_j-t_i)$, so that
\begin{equation}\label{eq:mpcs1}
\bd_{pcs,j}=c_{pcs} \int_{R,t}\frac{p(R,t)}{\|R(P_j-t)\|^5} 
(R (P_j-t))^t\chi (R (P_j-t))dRdt.
\end{equation}

Different metal ions $M_k$ may be substituted in the same binding site belonging to the N terminal without influencing the fold of the protein \cite{ABJLLL}. We suppose that each set of measurements relative to metal $M_k$ is obtained by averaging values relative to conformers, using the same probability distribution $d\in D$. Note the following proposition, see for instance \cite{RB}.

\begin{prp}\label{prp:maxmet}
Independent PCS and RDC measurements may be obtained from at most $5$ different metal ions $M_k$.
\end{prp}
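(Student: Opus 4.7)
The plan is to exploit the fact that both averaged PCS and RDC signals depend linearly on the paramagnetic tensor $\chi$ associated with each metal ion, and that the ambient space of admissible tensors has dimension $5$.

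First I would make the linearity explicit. For any probability distribution $p$ on the compact space of Euler transformations and any atomic pair or atom index $j$, the integrand in (\ref{eq:mrdc1}) and in (\ref{eq:mpcs1}) is a quadratic form $v^t\chi v$ in the vector $v=R P_j$ (for RDC) or $v=R(P_j-t)$ (for PCS). Since a quadratic form is linear in the matrix of the form, both $\bd_{rdc,j}$ and $\bd_{pcs,j}$ are linear functionals of $\chi$, with coefficients that depend on $p$, on $P_j$, and on $c_{rdc}$ or $c_{pcs}$, but not on $\chi$ itself. Hence, for every fixed measurement geometry and every fixed distribution $d\in D$, each scalar output is a linear functional $L_j(\chi)$ on the space $V$ of admissible tensors.

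Second I would recall that by hypothesis $\chi$ is a symmetric, null-trace $3\times 3$ matrix, so the space $V$ of such matrices has dimension $5$. Therefore, given any collection of metal ions $M_1,\dots,M_n$ with associated tensors $\chi_1,\dots,\chi_n\in V$ and $n\ge 6$, there must exist scalars $a_1,\dots,a_n$, not all zero, such that $\sum_{k=1}^{n} a_k\chi_k=0$. These coefficients depend only on the tensors (which are fitted once and for all from the metal-domain measurements, as recalled before the statement) and are independent of the unknown distribution $d$.

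Finally I would close the argument by applying the linearity of $L_j$: for every atom index $j$ and every distribution $d$,
\begin{equation*}
\sum_{k=1}^{n} a_k\,\bd_{rdc,j,k}=L_j\Bigl(\sum_{k=1}^{n}a_k\chi_k\Bigr)=0,
\end{equation*}
and the identical identity holds with $\bd_{pcs,j,k}$ in place of $\bd_{rdc,j,k}$. Thus the measurements relative to any sixth metal are a fixed, $d$\textendash independent linear combination of the measurements relative to five linearly independent metals, and therefore cannot carry information on $d$ that is not already contained in the first five. Picking five metals whose tensors form a basis of $V$ shows that this bound is sharp.

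The only delicate point is conceptual rather than technical: one must emphasise that the linear dependence is between the measurements \emph{as functionals of the unknown distribution}, so the relation $\sum a_k\bd_{\cdot,j,k}=0$ holds identically in $d$. Once this is observed, independence of up to five metals and redundancy from the sixth onward follow immediately from elementary linear algebra on the $5$-dimensional space $V$.
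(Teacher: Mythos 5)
Your proposal is correct and follows essentially the same route as the paper: both rest on the observation that each (mean) RDC and PCS is linear in the tensor $\chi$, that the space of symmetric traceless $3\times 3$ matrices has dimension $5$, and that consequently the measurements of a sixth metal are the same fixed, $d$-independent linear combination of those of five independent metals. Your write-up merely makes explicit the linearity and the $d$-independence of the coefficients, which the paper leaves implicit.
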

This is due to the fact there are at most $5$ linearly independent paramagnetic tensors $\chi^k$ relative to metals $k$. A sixth tensor $\chi^6$ may be written as a linear combination of the first five tensors. Hence PCS and RDC (and indeed mean PCS and RDC) relative to this sixth metal can be written as the same linear combination of the measurements relative to the first five metals, see (\ref{eq:mrdc1}) and (\ref{eq:mpcs1}).

In general we cannot determine $d$ from the measurements of the moving terminal. The problem is in fact underdetermined. The target distribution $d$ is a function of six variables, those defining the Euler transformation. If we only consider RDC, $d$ is a function of the three variables identifying the rotation, be them unitary quaternions or Euler angles.

On the other hand we only have a finite number of measurements. Moreover, it is well known that the maximal number of independent RDC measurements from atoms of the C terminal is 25, see for instance \cite{MPPGB}. Also, the information content of PCS is weak \cite{ABFLPRS}. Hence, no matter how many measurements are available, the distribution $d$ cannot be recovered except in some trivial cases.

\smallskip
We now report some well known properties of RDC and PCS, see for instance \cite{RB,AVTZEM}. We first examine the case of the RDC measurements.

\smallskip\noindent
\begin{prp}\label{prp:maxrdc}
The maximal number of independent mean RDC measurements is $25$.
\end{prp}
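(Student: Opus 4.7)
The plan is to bound the number of independent measurements by showing they depend linearly on a finite-dimensional object, then use Proposition~\ref{prp:maxmet} to cap the number of independent objects, and finally argue (generically) that the bound is attained.

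First I would fix a metal $M_k$ and note that, by equation~(\ref{eq:rdc3}), every mean RDC measurement $\bd_{rdc,j}^k$ is a linear functional of the mean paramagnetic tensor $\chim^k=\int_R p(R)R^t\chi^k R\,dR$. Since $\chi^k$ is symmetric and traceless, the averaging operator $R^t(\cdot)R$ preserves symmetry and the trace, so $\chim^k$ is again a symmetric traceless $3\times 3$ matrix, living in a $5$-dimensional vector space. Consequently, for a single metal the map sending the vectors $\{P_j\}$ to the measurements $\{\bd_{rdc,j}^k\}$ factors through this $5$-dimensional space, giving at most $5$ linearly independent mean RDC per metal.

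Next I would combine this with Proposition~\ref{prp:maxmet}. Let $\chi^1,\dots,\chi^5$ be a basis of the space of admissible paramagnetic tensors and let $\chi^6=\sum_{k=1}^5\alpha_k\chi^k$ be any sixth tensor. Since the map $\chi\mapsto\chim$ (at the fixed distribution $d$) is linear, $\chim^6=\sum_{k=1}^5\alpha_k\chim^k$, and then from~(\ref{eq:rdc3}) each $\bd_{rdc,j}^6$ is the same linear combination of $\bd_{rdc,j}^1,\dots,\bd_{rdc,j}^5$. So measurements from a sixth metal are redundant, and at most $5$ metals contribute independent data. Together with the per-metal bound, the total is at most $5\times 5=25$.

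Finally I would argue that $25$ is sharp by a generic-position argument: choose five vectors $P_{j_1},\dots,P_{j_5}$ such that the quadratic forms $P_j^t(\cdot)P_j$ span the dual of the $5$-dimensional space of symmetric traceless matrices (this is a generic condition on the $P_j$, easily checked by noting that five generic rank-one symmetric matrices $P_jP_j^t$ are linearly independent modulo the trace direction). Then for each of the five independent metals, the corresponding $5$ measurements are independent, and independence across metals follows because the $\chim^k$ themselves are linearly independent at a generic distribution $d$.

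The main obstacle is the sharpness direction: the upper bound part is essentially a dimension count, but showing that $25$ is attained requires exhibiting a configuration of $P_j$'s and a distribution $d$ for which the total linear map has rank $25$. I would lean on the genericity argument above and on the references cited (\cite{RB,AVTZEM,MPPGB}) rather than construct an explicit example, since the quantitative claim of $25$ is standard in the NMR literature.
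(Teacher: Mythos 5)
Your proposal is correct and follows essentially the same route as the paper, which likewise factors the count into (i) at most $5$ independent mean paramagnetic tensors (Proposition~\ref{prp:maxmet}) and (ii) at most $5$ independent mean RDC per metal, and then defers the attainability of $25$ to \cite[Theorem 3.2]{LLPS}. Your added detail on the upper bound is sound; just note that the sharpness step is the genuinely nontrivial part (one needs the joint map $d\mapsto(\chim^1,\dots,\chim^5)$ to have a $25$-dimensional image, which is stronger than linear independence of the five $\chim^k$ at a single generic $d$), and both you and the paper ultimately lean on the cited reference for it.
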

\noindent
This result is the consequence of two different properties:
\begin{itemize}
\item[(i)]The maximal number of independent mean paramagnetic tensors is $5$ (see Proposition~\ref{prp:maxmet}).
\item[(ii)]The maximal number of independent mean RDC for each metal is $5$.
\end{itemize}
For a proof, see \cite[Theorem 3.2]{LLPS}.

\noindent
More precisely, the independence of the RDC measurements is directly correlated to the independence of the mean paramagnetic tensors (\ref{eq:measm1}).
\begin{prp} Let $n$ be the number of independent mean paramagnetic tensors. Then the number of independent RDC measurements is $5n$.
\end{prp}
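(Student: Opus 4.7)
The plan is to combine the linearity of (\ref{eq:rdc3}) in the mean paramagnetic tensor with the per-metal bound (ii) from Proposition~\ref{prp:maxrdc}. By (\ref{eq:rdc3}) each mean RDC measurement for metal $k$ is a linear functional
\[
L_j(\chim^k) := \frac{c_{rdc}}{\|P_j\|^{5}}\,P_j^t\chim^k P_j
\]
of the single $5$-dimensional variable $\chim^k$, regarded as an element of the space $V$ of $3\times 3$ symmetric traceless matrices. Property (ii) says that the functionals $L_j$, as $P_j$ varies, span at most a $5$-dimensional subspace of $V^*$, and that one can pick $P_1,\ldots,P_5$ such that $L_1,\ldots,L_5$ form an actual basis of $V^*$.

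First I would select $n$ metals whose mean paramagnetic tensors $\chim^1,\ldots,\chim^n$ form a basis of the span of all mean tensors. For any further metal $k'$, writing $\chim^{k'}=\sum_{i=1}^n\alpha_i\chim^i$ and substituting into (\ref{eq:rdc3}) yields $\bd_{rdc,j}^{k'}=\sum_i\alpha_i\bd_{rdc,j}^i$ identically in $P_j$, so metals beyond the first $n$ contribute no independent data and the number of independent RDC is at most $5n$.

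For the matching lower bound, the $5n$ functionals $(\chim^1,\ldots,\chim^n)\mapsto L_j(\chim^k)$ on $V^n$ are linearly independent: under the canonical identification $(V^n)^*\simeq(V^*)^n$, each such functional is the tuple with $L_j$ in slot $k$ and zero elsewhere, and since $\{L_1,\ldots,L_5\}$ is a basis of $V^*$ the $5n$ resulting tuples form a basis of $(V^n)^*$. The main subtlety I anticipate is passing from this algebraic independence on $V^n$ to independence of the measurements as functionals of $d\in D$: one should verify that the linear map $d\mapsto(\chim^1,\ldots,\chim^n)$ has rank $5n$, so that pull-back preserves independence. This should be handled by testing on $\delta$-masses $\delta_{(R,t)}$, using that sufficiently many point masses produce, via the individual $SO(3)$-conjugation orbits, tuples spanning all of $V^n$ precisely because the underlying $\chi^k$ are linearly independent in $V$.
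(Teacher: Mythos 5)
Your reduction is set up correctly: the upper bound (a dependent metal's measurements are the same linear combination of the basis metals' measurements, and each metal contributes at most $5$ independent functionals of its $5$-dimensional mean tensor $\chim^k$) is fine, and you correctly locate the crux in showing that the joint map $d\mapsto(\chim^1,\ldots,\chim^n)$ has image spanning all of $V^n$. But that is exactly the step you do not prove, and the justification you offer --- ``via the individual $SO(3)$-conjugation orbits \ldots precisely because the underlying $\chi^k$ are linearly independent'' --- is not an argument. That each individual orbit $\{R^t\chi^k R\}$ spans $V$ holds for any nonzero traceless symmetric tensor and says nothing about the \emph{diagonal} orbit $\{(R^t\chi^1R,\ldots,R^t\chi^nR)\}$: if $\chi^2=\chi^1$, both individual orbits span $V$ yet the diagonal orbit lies in the $5$-dimensional diagonal of $V^2$. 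What is needed is that for a linearly independent tuple the diagonal conjugation orbit spans $V^n$, and this uses that $V$ is an irreducible real representation of $SO(3)$ of real type: then every subrepresentation of $V\otimes\R^n$ has the form $V\otimes W$, and the condition $\sum_k\chi^k\otimes e_k\in V\otimes W$ forces $W=\R^n$ precisely when the $\chi^k$ are independent. Linear independence alone, without this irreducibility/Schur input (or an equivalent explicit spanning computation), would not suffice in a general representation.

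This missing lemma is the entire nontrivial content of the proposition: the paper gives no proof of its own and defers to \cite{LSS}, a paper devoted to the spans and convex hulls of exactly these orbits. So your proposal is a correct framing with an honest flag of the difficulty, but the decisive step is asserted rather than proved.
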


\noindent
For the proof we refer to \cite{LSS}.

Proposition~\ref{prp:maxmet} holds also for PCS, thus the maximum number of independent metal ions is again $5$. However, in principle each mean PCS is independent from the other, see formula (\ref{eq:measmcont}). In practise if two atoms $j_1$ and $j_2$ are close, so are $P_{j_1}$ and $P_{j_2}$. Hence the values of the averaged PCS from (\ref{eq:mpcs1}) are also close. Mathematically speaking we may observe that the values $\bd_{pcs,{j_1}}$ and $\bd_{pcs,{j_2}}$ are heavily correlated, so that the new information added by atom $j_2$ is very weak.

\section{The simplex algorithm}
\subsection{Geometrical setting}\label{subs:geo}
Suppose we have $n\le 5$ metal ions, and that $\chi^k$ are already given or determined via the RDC and PCS of the metal domain. Take any $d\in D$, we can calculate the mean RDC and PCS with the general formula
\begin{equation}\label{eq:measmcont}
\bd_j=c_j \int_{R,t}\frac{p(R,t)}{\|E(P_j)\|^5} 
E(P_j)^t\chi^{k_j} E(P_j)dRdt,
\end{equation}
where $p(R,t)$ is the probability density of $d$ at $(R,t)$, and $E(P)$ is defined by (\ref{eq:eul1}). The values $P_j$, $c_j$ and $k_j\le n$ depend on the choice of atoms and the type of measurement. The term $\|E(P_j)\|$ is constant for RDC. In the case of PCS, for physical reasons the distance between the metal ion and any other atom is anyway bounded away from $0$. Hence we may suppose that the measurements $|\bd_j|$ are uniformly bounded. We can obtain a certain number of RDC and PCS measurements for each of the $n$ metals, not necessarily referring to the same atoms. Let $\nrdc$ be the total number of mean RDC, $\npcs$ be the total number of mean PCS, and let $\nmeas=\nrdc+\npcs$.

We can collect the measurements calculated from (\ref{eq:measm1}) in a vector, so that each $d\in D$ defines a point $\bar\delta\in\R^{\nmeas}$. The key point of the geometrical approach is the projection from the space of finite distributions to the space of the measurements. Let $\Pi$ be such a projection, we may also decompose $\Pi$ into the RDC and PCS components:
\begin{equation}\label{eq:proj}
\Pi(d)\equiv\left(\begin{array}{c}\Pi_{rdc}(d)\\\Pi_{pcs}(d)\end{array}\right)=\left(\begin{array}{c}\bd_{rdc,1}\\\ldots\\
\bd_{rdc,\nrdc}\\\bd_{pcs,1}\\\ldots\\\bd_{pcs,\npcs}\end{array}\right).
\end{equation}
Let
\begin{equation}\label{eq:v2}
V=\{v\in\R^{\nmeas} : v=\Pi(d), d\in D\}.
\end{equation}
The set $V$ is compact because the measurements are uniformly bounded. The set $V$ is convex because if $v_i\in V$, $v_i=\Pi(d_i)$, $i=1,2$, then 
$\lambda d_1 + (1-\lambda)d_2\in D$, $\forall\lambda\in[0,1]$, so that
\begin{equation}
\lambda v_1 +(1-\lambda)v_2=\Pi(\lambda d_1 + (1-\lambda)d_2)\in V.
\end{equation}

Each convex set is the convex hull of its extreme points (also called vertices), i.e. the points that cannot be reconstructed using a convex combination of different points of the set. Let $\Delta\subset D$ the set of finite probability distributions, and let $\dhat\subset \Delta$ the set of probability distributions made by a single point. Because of the non-linearity, in general it is not true that each $\Pi(\dv)$ is a vertex of $V$, though we suspect this is the case in our setting. On the other hand, the set of vertices is a subset of $\Pi(\dv)$, since $V$ is the set of convex combinations of these points. We do not need the property that each $\Pi(\dv)$ may be uniquely reconstructed, so we can nevertheless identify the set of vertices with $\Pi(\dv)$.

\begin{prp}\label{prp:finite}For each $d\in D$ there exists a $\tilde d\in \Delta$ such that $\Pi(\tilde d)=\Pi(d)$.
\end{prp}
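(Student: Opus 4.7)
The plan is to apply Carathéodory's theorem to the compact convex set $V \subset \R^{\nmeas}$. The crucial structural observation is that although the measurements in (\ref{eq:measmcont}) depend nonlinearly on the Euler parameters $(R,t)$, the projection $\Pi$ itself is \emph{linear} in $d$: each coordinate $\bd_j$ is an integral of a fixed $(R,t)$-kernel against the density $p(R,t)$, so for any finite convex combination $d=\sum_i\lambda_i d_i$ one has $\Pi(d)=\sum_i\lambda_i\Pi(d_i)$. In particular, $\Pi$ carries convex combinations of Diracs in $\dhat$ to the corresponding convex combinations in $\Pi(\dhat)\subset\R^{\nmeas}$.

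Given an arbitrary $d\in D$, set $v=\Pi(d)$. Viewing $d$ as the continuous mixture of point masses with weights $p(R,t)\,dR\,dt$, formula (\ref{eq:measmcont}) identifies $v$ as the barycenter in $\R^{\nmeas}$ of the pushforward of $d$ by $\Pi$, which is a probability measure supported on $\Pi(\dhat)$. The set $\Pi(\dhat)$ is compact: the space of allowed Euler transformations is compact by the linker bound $t_{min}\le\|t\|\le t_{max}$, and $\Pi$ restricted to Diracs is continuous because the integrand in (\ref{eq:measmcont}) is continuous, with the PCS denominator bounded away from zero by the physical distance constraint. The barycenter of a probability measure on a compact subset of $\R^{\nmeas}$ lies in the convex hull of that subset, so $v\in\mathrm{conv}(\Pi(\dhat))$.

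By Carathéodory's theorem in $\R^{\nmeas}$, there exist $N\le\nmeas+1$ Diracs $\dv_1,\dots,\dv_N\in\dhat$ and weights $\lambda_i\ge 0$ with $\sum_i\lambda_i=1$ such that $v=\sum_{i=1}^{N}\lambda_i\Pi(\dv_i)$. Setting $\tilde d=\sum_{i=1}^{N}\lambda_i\dv_i\in\Delta$ gives a finitely supported probability distribution with at most $\nmeas+1$ atoms, and the linearity of $\Pi$ then yields $\Pi(\tilde d)=\sum_i\lambda_i\Pi(\dv_i)=v=\Pi(d)$, as required.

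The main obstacle is essentially conceptual rather than technical: one must recognise that the nonlinearity which prevents the RDC-style tensor averaging in (\ref{eq:chim1})--(\ref{eq:rdc3}) from working for PCS is a nonlinearity in the Euler parameters, \emph{not} in the probability measure. Once this is noted, the statement reduces to a direct application of Carathéodory, and as a byproduct one obtains the explicit bound $N\le\nmeas+1$ on the number of atoms needed in $\tilde d$, which will be useful for the simplex algorithm developed in the next sections.
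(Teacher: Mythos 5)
Your proof is correct and follows essentially the same route as the paper: Carath\'{e}odory's theorem applied to $\Pi(d)\in V=\mathrm{conv}(\Pi(\dhat))$ to write it as a convex combination of at most $\nmeas+1$ points $\Pi(\dv_i)$, then lifting the weights back to $\tilde d=\sum_i\lambda_i\dv_i\in\Delta$ via the linearity of $\Pi$ in the measure. The only difference is that you explicitly justify the step the paper leaves implicit, namely that $\Pi(d)$ is the barycenter of the pushforward measure on the compact set $\Pi(\dhat)$ and hence lies in its convex hull.
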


\begin{proof}By Carath\'{e}odory's theorem, each $\Pi(d)\in V$ may be reconstructed with a convex combination of at most $\nmeas+1$ vertices of $V$. Let $\Pi(d)=\sum_{i=0}^{\nmeas} p_i \Pi(\dv_i)$, with $\dv_i\in\dhat$.
Then $\tilde d=\sum_{i=0}^{\nmeas} p_i \dv_i\in \Delta$ is the required distribution.
\end{proof}

\smallskip\noindent
{\bf{Remark:\ }}Proposition~\ref{prp:finite} entitles us to work with finite distributions of probability without loss of generality. If $d\equiv(p_i,R_i,t_i)\in \Delta$, formula (\ref{eq:measmcont}) may be rewritten as

\begin{equation}\label{eq:measm1}
\bd_j=c_j \sum_i\frac{1}{\|E(P_j)\|^5} 
p_i E(P_j)^t\chi^{k_j} E(P_j).
\end{equation}

\begin{prp}\label{prp:m0}There exists a $d\in\Delta$ such that $\Pi(d)=0$.
\end{prp}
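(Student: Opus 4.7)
The plan is to construct an explicit finite distribution $d\in\Delta$ with $\Pi(d)=0$ by group averaging, exploiting the fact that each paramagnetic tensor $\chi^k$ is symmetric and traceless. First I would fix any admissible translation $t_0$ satisfying $t_{min}\le\|t_0\|\le t_{max}$, let $G=\{R_1,\ldots,R_{24}\}\subset SO(3)$ be the rotation group of the cube, and set
\begin{equation*}
d=\frac{1}{24}\sum_{k=1}^{24}\dv_k\in\Delta,
\end{equation*}
where $\dv_k\in\dhat$ is the single-state distribution concentrated at the Euler transformation $(R_k,t_0)$.

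Next I would substitute this $d$ into the finite-distribution measurement formula (\ref{eq:measm1}). Writing $Q_j=P_j$ for RDC and $Q_j=P_j-t_0$ for PCS, the norms $\|R_k Q_j\|=\|Q_j\|$ are independent of $k$, so the $k$-sum factors through the matrix $M_{k_j}:=\sum_{k=1}^{24} R_k^t\chi^{k_j} R_k$ and yields
\begin{equation*}
\bd_j=\frac{c_j}{24\,\|Q_j\|^5}\,Q_j^t M_{k_j} Q_j.
\end{equation*}

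Then I would show that each $M_{k_j}$ vanishes. By construction $R^t M_{k_j} R=M_{k_j}$, equivalently $M_{k_j}R=RM_{k_j}$, for every $R\in G$. Since $G$ contains rotations sending each coordinate axis to any other (up to sign), its standard representation on $\R^3$ is irreducible, so Schur's lemma forces $M_{k_j}=\lambda I$. Taking the trace and using $\mathrm{tr}(R_k^t\chi^{k_j} R_k)=\mathrm{tr}(\chi^{k_j})=0$ gives $\lambda=0$. Hence $\bd_j=0$ for every $j$ and $\Pi(d)=0$.

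The only step requiring real thought is producing a finite symmetry group with no nonzero invariant traceless symmetric tensor; the full octahedral rotation group works because of the irreducibility of its standard $3$-dimensional representation, whereas a smaller subgroup such as the Klein four-group would annihilate only the off-diagonal components of $\chi^{k_j}$ in a diagonalising basis. Everything else in the argument is direct substitution, and the linker bound plays no role beyond guaranteeing the existence of at least one admissible $t_0$.
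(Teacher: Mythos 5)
Your proof is correct, and it reaches the conclusion by a genuinely more constructive route than the paper. The paper's own proof also fixes the translation and averages over rotations, but it uses the \emph{continuous} Haar measure on $SO(3)$, citing \cite{GLS} for the fact that $\int_R R^t\chi R\,H(R)dR=0$, and then has to invoke Proposition~\ref{prp:finite} (Carath\'{e}odory) to pass from that continuous distribution to some unspecified finite $d\in\Delta$ with the same projection. You instead average over the $24$-element octahedral rotation group, so your $d$ is already an explicit element of $\Delta$ and no appeal to Carath\'{e}odory or to the external reference is needed; the vanishing of the averaged tensor is supplied by the absolute irreducibility of the standard representation of $G$ on $\R^3$ together with tracelessness of $\chi^{k_j}$ (your Schur argument is sound --- since $M_{k_j}$ is in addition real symmetric, its eigenspaces are $G$-invariant, forcing $M_{k_j}=\lambda I$ and then $\lambda=0$). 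Both arguments hinge on the same invariance principle, namely that rotational averaging annihilates traceless symmetric tensors while leaving $\|E(P_j)\|$ unchanged when the translation is held fixed; what your version buys is self-containedness and an explicit finite witness (in the spirit of the constructive example the paper attributes to \cite{S1}), at the cost of a small amount of representation theory, while the paper's version is shorter given the results it can cite. Your closing remark that the Klein four-group would not suffice is also accurate and correctly identifies irreducibility as the essential requirement.
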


\begin{proof}The proposition is proven in \cite{GLS} for the case of RDC, and a constructive example with a finite distribution is given in \cite{S1}. Fix the origin of the Cartesian system in the binding site of the metal. Let $\tilde d\in D$ such that the translation $t$ is fixed and the rotational part coincides with the Haar measure $H(R)$, see for instance \cite{W}. Then $\Pi_{rdc}(\tilde d)=0$.  With these choices we also have $\Pi_{pcs}(\tilde d)=0$. Fix a $j$ relative to a PCS in formula (\ref{eq:measm1}). Let $\tilde P_j=P_j-t$, then $\|E(P_j)\|=\|R\tilde P_j\|=\|\tilde P_j\|$ for every rotation $R$ because the metal is in the origin. Hence
\begin{equation}\label{eq:measm2}
\begin{aligned}
\bd_j=&c_j \frac{1}{\|\tilde P_j\|^5} \int_R
(R\tilde P_j)^t\chi^{k_j} (R \tilde P_j)H(R)dR\\
=&c_j \frac{1}{\|\tilde P_j\|^5} \tilde P_j^t \left(\int_R R^t\chi^{k_j} R H(R)dR\right) \tilde P_j=0.
\end{aligned}
\end{equation}
This is due to the fact that the integrand in parenthesis is the mean paramagnetic tensor, and its integral is $0$ for the Haar measure \cite{GLS}. The existence of a $d\in\Delta$ is then guaranteed by Proposition~\ref{prp:finite}.
\end{proof}

The dimension $N\le \nmeas$ of the set $V$ is a key point which can be determined from the data. Using the results of the previous section, if we suppose that we have at least $5$ independent RDC measurements for each of the $n$ metal ions, then $N=5n+\npcs$. However, since the PCS are only marginally linearly independent, it is to be expected that there are directions where the set $V$ is very thin, so that the effective determination of $N$ should involve also some numerical considerations. In the supplementary information we analyse in detail the linear independence of the PCS versus the RDC measurements, and the consequences on the expected results.

\subsection{Definition of the MAP}
Let $\db$ the true unknown distribution of probability. Then, given any Euler transformation $(R,t)$ we define 
\begin{equation}\label{eq:pmax1}
\pmax(R,t)=\max_{d\in\Delta}\{p: (p,R,t)\in d {\rm\ and\ } \Pi(d)=\Pi(\db)\}. 
\end{equation}
In other words given any conformer, identified by the Euler transformation $(R,t)$, we look for the maximal coefficient $p$ that we can apply to this conformer in a convex combination such that the projection in $V$ is the same as that of $\db$. Suppose $\Pi(\db)$ belongs to the interior of $V$. Let $\Pi(\dv)=\Pi(1,R,t)$ be the vertex corresponding to the position $(R,t)$. Consider the line passing through $\Pi(\dv)$ and $\Pi(\db)$, the segment in between the two points belongs to $V$ because of the convexity. Moreover, since $\Pi(\db)$ is internal, there exists a point $\Pi(q)\in V$ on the continuation of the segment on the side of $\Pi(\db)$. Then $\Pi(\db)$ is the convex combination of $\Pi(q)$ and $\Pi(\dv)$, i.e. there exists a $p\in(0,1)$ such that
\begin{equation}\label{eq:dm1}
\Pi(\db)=p\Pi(\dv)+(1-p)\Pi(q).
\end{equation}
By definition we have $\pmax(R,t)\ge p$. The value $p$ can be explicitly determined using the distances (i.e. the $L^2$ norms) in $\R^N$, in fact
\begin{equation}\label{eq:dm2}
\Pi(\db)=\frac{\|\Pi(\db)-\Pi(q)\|}{\|\Pi(\dv)-\Pi(q)\|}\Pi(\dv)+\frac{\|\Pi(\db)-\Pi(\dv)\|}{\|\Pi(\dv)-\Pi(q)\|}\Pi(q).
\end{equation}
The maximal $p$ which verifies (\ref{eq:dm1}) is then obtained from
the $q$ with projection in $V$ having the maximal distance from $\Pi(\db)$. Because of the convexity, $\Pi(q)$ is the point on the boundary of $V$ on the continuation of the segment connecting $\Pi(\dv)$ and $\Pi(\db)$. Unfortunately, except in some trivial cases, there is no analytical procedure for determining if a point $\Pi(q)$ belongs to the boundary of $V$, so that we have to use an iterative procedure.

\subsection{The simplex algorithm}
Let $N\le 5n+\npcs$ be the dimension of $V$. By Carath\'{e}odory's theorem there are $N+1$ vertices of the convex $V$ such that
\begin{equation}\label{eq:sim1}
\Pi(\db)=\sum_{j=0}^{N} p_i\Pi(\dv_i^0),
\end{equation}
with $p_i\ge 0$ and $\sum_i p_i=1$. Note again that we cannot suppose that $\db=\sum_i p_i \dv_i^0$ because in general the solution is not unique, we can only recover the projection. Let $S_0\subset V$ be the simplex formed by the convex combinations of the vertices $\Pi(\dv_i^0)$. We may suppose $S_0$ is a simplex in $\R^N$, i.e. the vectors $\Pi(\dv_i^0)-\Pi(\dv_0^0)$ are linearly independent in $\R^N$. Since the set $\Pi(\dv)$ is connected we may choose $S_0$ so that $\Pi(\db)$ is internal to $S_0$, i.e. $p_i>0$ $\forall i$ \cite{LLPS}.

\begin{figure}[ht]
\centerline{\includegraphics[width=0.8\columnwidth]{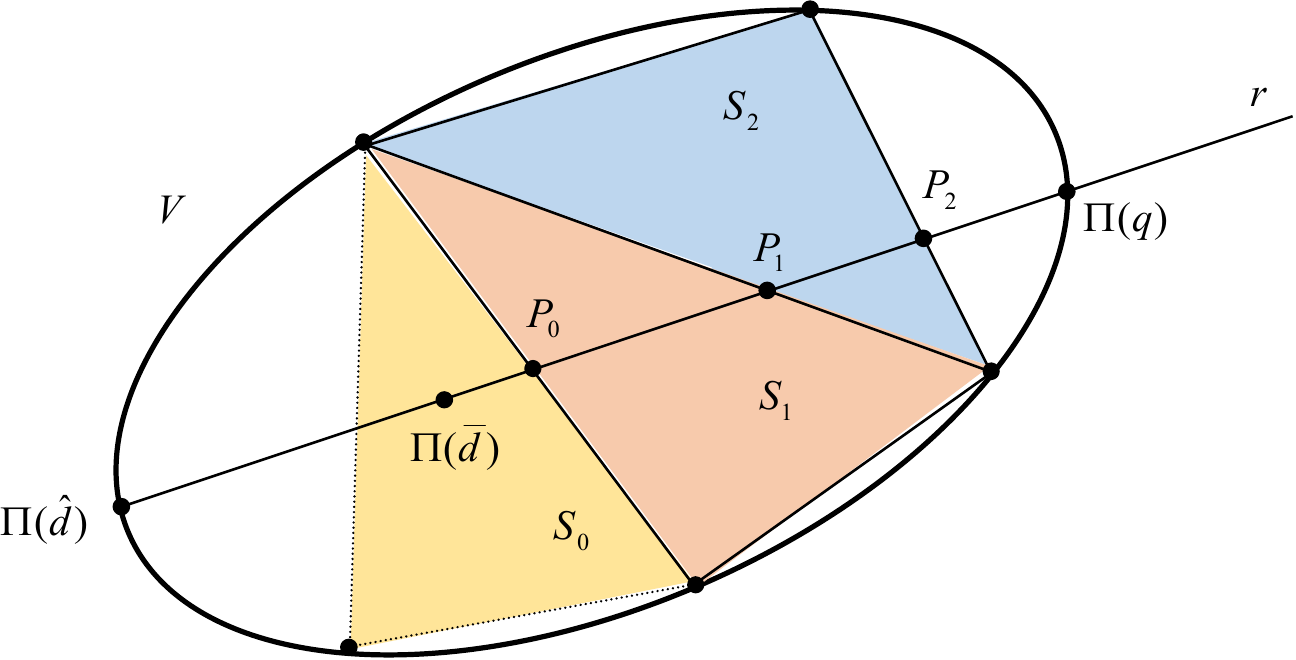}}
\caption{The simplex algorithm.\label{fig:simplex}}
\end{figure}


Now take any position $(R,t)$ and let $\hat d=(1,R,t)$, look at Fig.~\ref{fig:simplex} for reference. Take the line $r$ through $\Pi(\hat d)$ and $\Pi(\db)$. Since $\Pi(\db)$ is internal to $S_0$ there is a point $P_0\in\partial S_0$ on $r$ on the side opposite to $\Pi(\hat d)$ with respect to $\Pi(\db)$. The point $P_0$ identifies a face $F_{j_0}\subset S_0$ such that $P_0\in F_{j_0}$. The face $F_{j_0}$ is identified by removing the vertex ${j_0}$ from the set of vertices of $S_0$.

The point $P_0$ is either on the boundary or internal to $V$. In the first case we are finished because we have found the point needed by the definition of $\pmax$. In the second case, consider the hyperplane $H_{j_0}$ containing the face $F_{j_0}$. The hyperplane $H_{j_0}$ cannot be a support hyperplane since it contains an internal point, thus there will be at least a vertex $\dv_{j_0}^1$ on the half-plane defined by $H_{j_0}$ and not containing $\Pi(\db)$. Define $S_1$ to be the simplex with $\dv_{j_0}^0$ replaced by $\dv_{j_0}^1$.

We can iterate the algorithm, each time finding the two intersections of $r$ with the simplex $S_k$. The intersection point $P_k$ on $r$ farther from $\Pi(\db)$ determines a face $F_{j_k}$ of the simplex $S_k$. If this intersection point is internal we can replace the vertex $j_k$ of $S_k$ not belonging to $F_{j_k}$ with a new one, lying in the half-space determined by the hyperplane $H_{j_k}$ and not containing $\Pi(\db)$.

Thus we determine a monotonic sequence of points $P_k\in r$ converging to a point $P$. The point $P$ cannot be internal to $V$, otherwise the algorithm would have found a new replacement vertex. Then $P\in \partial V$is the point needed by the definition of $\pmax$.

\subsection{Determination of the projection matrix}
In principle the algorithm may be carried out in the ambient space $\R^{\nmeas}$ without any modifications. However, the dimension $N$ of $V$ is in general strictly smaller than the number of measurements $\nmeas$. The simplex algorithm works in the linear subspace spanned by $V$, which has dimension $N$. Using the $\nmeas$ ambient coordinates in this linear subspace is definitely a bad idea, because any numerical approximation in the calculations is likely to bring the points out of the linear subspace. Thus the first step is to determine the dimension $N$ of $V$, and the projection operator from $\Rnmeas$ into $\R^N$.

The dimension $N$ is the maximal number of linearly independent vectors of the form $\Pi(d_i)-\Pi(d_0)$, where $d_0$ is a fixed point in $\Delta$, and $d_i\in \Delta$. Because of Proposition~\ref{prp:m0}, we may take $d_0$ such that $\Pi(d_0)=0$. Since each point in $V$ is a convex combination of vertices, $N$ is then the maximal number of linearly independent vectors $\Pi(\dv_i)$, where $\dv_i\in\dhat$. 


The Singular Value Decomposition (SVD, see for instance \cite{PTVF}) may be used to determine $N$, as already done in \cite{S2} in a different context. Take points $\Pi(\dv_i)\in V$, $i=1,\ldots,M$, with $M>>\nmeas$, and form the matrix
\begin{equation}\label{eq:svd1}
A=(\Pi(\dv_1),\ldots,\Pi(\dv_M)),
\end{equation}
which has dimension $\nmeas\times M$. The SVD is based on the singular values of $A$, which are the square roots of the eigenvalues of the symmetric and positive semi-definite matrix $A^tA$.

The SVD decomposes the matrix $A$ in the form
\begin{equation}\label{eq:svd2}
A=U\Lambda W, 
\end{equation}
where $\Lambda$ is a $\nmeas\times \nmeas$ diagonal matrix containing the singular values of $A$ in decreasing order, $W$ is an orthogonal $\nmeas\times \nmeas$ matrix, and $U$ is a $M\times \nmeas$, column-orthogonal matrix, i.e. $\sum_k u_{ki}u_{kj}=\delta_{ij}$. Since the rank of $A$ is by definition $N$, the matrix $A$ has exactly $N$ non-zero singular values.

The matrix $W$ can be used as a projection matrix. In fact, for any $v\in V$ we have that $W v$ is the decomposition of $v$ with respect to the base of eigenvectors of $A^tA$. Moreover, since the matrix $\Lambda$ has only $N$ non-zero eigenvalues, we are only interested in the matrix $W_N\subset W$, including only the first $N$ rows of $W$. 


While the matrix $W_N^t W_N$ is not the identity, it works as such on the points of $V$. In fact the points of the linear space spanned by $V$ can be uniquely identified either by a subset of the $\nmeas$ coordinates satisfying $\nmeas-N$ linear conditions, or by the $N$ intrinsic coordinates obtained by applying $W_N$.

The following diagram pictures the situation:
{
\begin{equation*}
\begin{array}{cccccccccc}
\Delta & & \R^{\nmeas} & & \R^N & & \R^{\nmeas}\\
 & \Pi & & W_N & & W^t_N & \\
d & \rightarrow & \Pi(d) & \rightarrow & W_N(\Pi(d)) & \rightarrow & \Pi(d) \\
\end{array}
\end{equation*}
} 

\section{Implementation and self-consistency tests}\label{se:imple}
In this section we describe the implementation of the simplex algorithm and report the results of simulations run with exact measurements. It is a "proof of concept" that the method works, and a necessary step to understand the interactions of the measurements before considering noisy data.

\subsection{Experimental setting}
As a model for calmodulin we use the pdb fold as determined by \cite{BGKLP}, shown in Figure~\ref{fig:pdb}.
\begin{figure}[ht]
\centerline{\includegraphics[width=0.8\columnwidth,trim={130 480 120 80},clip]{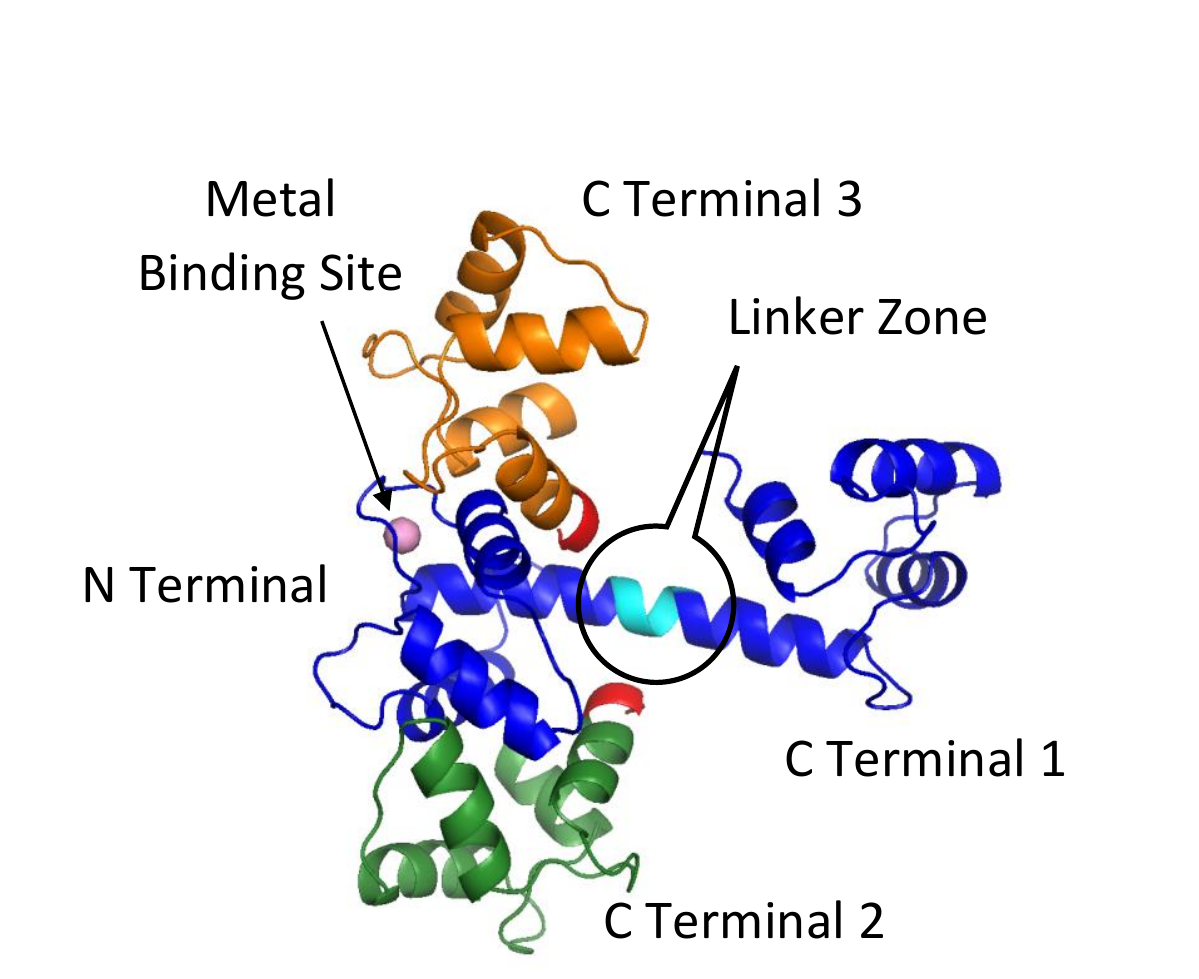}}
\caption{The fold of calmodulin and its conformational freedom. For the figure in color: \emph{pink:} metal position; \emph{blue:} N terminal and position $C_1$ of the C terminal, linker (\emph{cyan}) shown only for this position; \emph{green:} position $C_2$;  \emph{gold:} position $C_3$. Residuals 82--83 of C terminal shown in \emph{red} for $C_2$ and $C_3$ in order to highlight the connection points of the linker(not shown) to the C terminal.\label{fig:pdb}}
\end{figure}
In physiological conditions the long $\alpha$-helix breaks in the zone between residuals $77$ and $81$ \cite{BIKPB,BST}, resulting in some conformational freedom of the C terminal. Any position such that the linker length is between $6$\AA\ and $12$\AA\ is considered to be attainable. Any position such that there exists two ${\rm C}_\alpha$ atoms with a distance smaller than $3$\AA\ is considered to be a physical violation. A conformer satisfying both conditions is an allowable state for the C terminal. The positions of the C terminal shown are marked $C_1$ through $C_3$, with $C_1$ being the position where the linker remains folded into the $\alpha$-helix. In position $C_2$ the C terminal is close to the N terminal, but far from the metal. In position $C_3$ the C terminal is close both to the N terminal and to the metal. We have highlighted in red the first two residuals of the C terminal for positions $C_2$ and $C_3$ to show the connection point to the linker.

The measurements are generated using a continuous probability distribution $d_l\in D$ centered in conformers $C_l\equiv(R_l,t_l)$, $l=1,2,3$. The measurements are calculated using formula (\ref{eq:measm1}) as the arithmetic average of a large number $M$ of allowable states drawn according to the distribution $d_l$. More precisely, given two positive numbers $\sigma_R$ and $\sigma_t$ we draw conformers in the following way. The translation is drawn according to a Gaussian distribution with average $t_k$ and standard deviation of the module $\sigma_t$. The rotation is drawn according to a von Mises-Fisher distribution (see for instance \cite{W}) with average $R_k$ and standard deviation $\sigma_R$ of the rotational distance, calculated using quaternions. We only retain allowable states. The number $M$ is large enough to stabilize the measurements, thus simulating a continuous probability distribution. Loosely speaking the C terminal moves around the center position $C_k$ in such a way that the average deviation from the central position is $\sigma_t$ Amstrongs for the translation, and $\sigma_R$ degrees for the rotation. In this paper we use the numerical values $\sigma_t=3$\AA\ and $\sigma_R=20^\circ$.

We generated mean measurements with respect to three different paramagnetic tensors, corresponding to Tb, Tm and Dy lanthanide ions substituted for Ca in the second binding site of the N-terminal. We simulated a total of $112$ mean RDC using N-H dipoles from residuals of the C terminals, and a total of $132$ mean PCS using HN atoms from the C terminal.

In principle the distribution is symmetric around the center. However the constraint on the physical violations may introduce asymmetries in the distribution. This happens in cases $C_2$ and $C_3$, when the center position of the C terminal is close to the N terminal. As a consequence there is a small shift in the most probable position of the distribution. In the supplementary information we discuss in detail this issue.

\subsection{Observability of a central tendency}
Suppose there is a central tendency in the data. The MAP estimate is able to detect whether this tendency exists. To show this fact, we considered a probability distribution $d_0$ where all the allowable states are equally probable in the correct metric. If there were no constraints on the conformers the simulated measurements would be $0$ by Property~\ref{prp:m0}. The large conformational freedom is however detectable from the RDC measurements. The standard deviation of the simulated RDC is in fact $0.36$ for $d_0$, while is larger than $5$ for the $d_l$ cases, $l\ge 1$. The situation is different for PCS. In this case, small values are obtained both when there is a large conformational freedom and when the C terminal is far from the binding site of the metal. The standard deviation of the simulated PCS is $0.10$ for $d_0$, $0.14$ for $d_1$, $0.08$ for $d_2$ and finally $0.94$ for $d_3$, the case where the C terminal is closer to the metal binding site.

The MAP estimate detects this difference in the data. In the $d_0$ case using RDC we found $0.31\le \pmax(R) \le 0.34$ for all the orientations of the C terminal. Typical values for the other $d_i$ cases are $0.1\le \pmax(R) \le 0.70$, thus having a much larger span. The MAP estimate is then in principle able to detect an asymmetry in the data due to a restricted conformational freedom.

\subsection{Determination of the central tendency}\label{ss:algorithm}
We used the following steps to determine a central tendency of the measurement.
\begin{itemize}
\item[1.\ ]The orientation $R_0$ of the conformer with the largest $\pmax$ is determined using RDC alone.
\item[2.\ ]The translation $t_0$ of the conformer with the largest $\pmax$ is determined using PCS alone. The rotation is kept fixed at $R=R_0$.
\end{itemize}

In the following figures we present the results of the tests. 

\begin{figure}[ht]
\centerline{\includegraphics[width=\columnwidth, trim={20 40 40 30}]{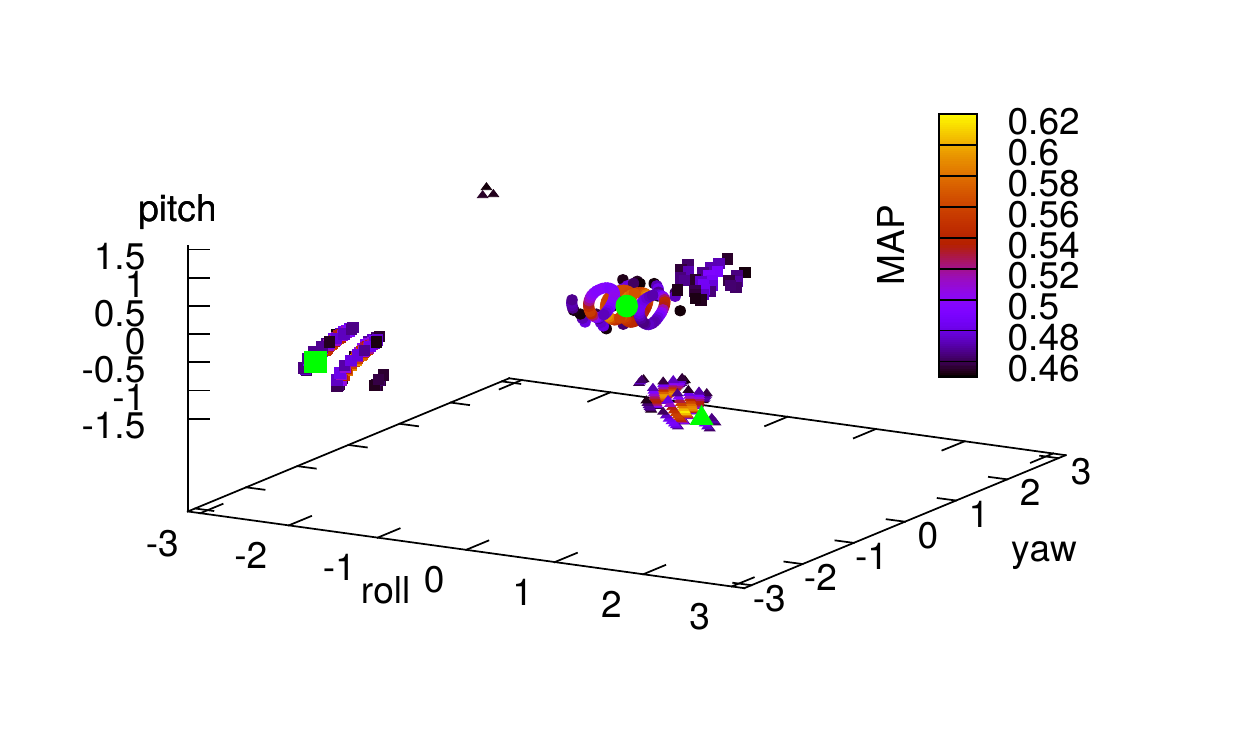}}
\caption{Tests with RDC alone. Cases $C_1$ (circles), $C_2$ (squares), $C_3$ (triangles).\label{fig:test1noerr}}
\end{figure}

In Figure~\ref{fig:test1noerr} we show the results of step 1 of the algorithm in cases $C_1$ (circles), $C_2$ (squares), $C_3$ (triangles). The points represent the orientations of the sample with MAP larger than a certain threshold. The larger green dots, here and in the subsequent figures, mark the positions of the centers. In cases $C_2$ and $C_3$ there are two different zones with large MAP, due to the so-called phenomenon of \emph{ghost cones} \cite{LLPS}, which derives from the symmetries of the RDC formula.

\begin{figure}[ht]
\centerline{
\includegraphics[width=0.49\columnwidth, trim={20 40 30 30}]{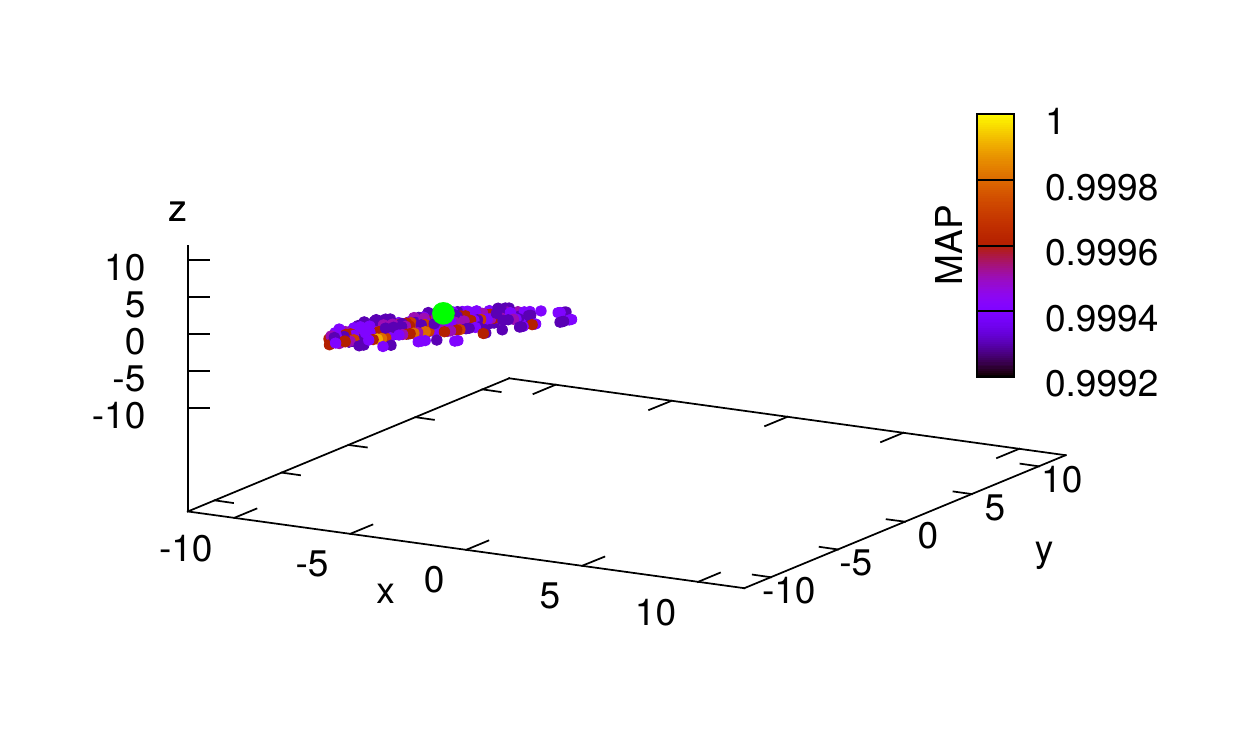}
\includegraphics[width=0.49\columnwidth, trim={10 40 40 30}]{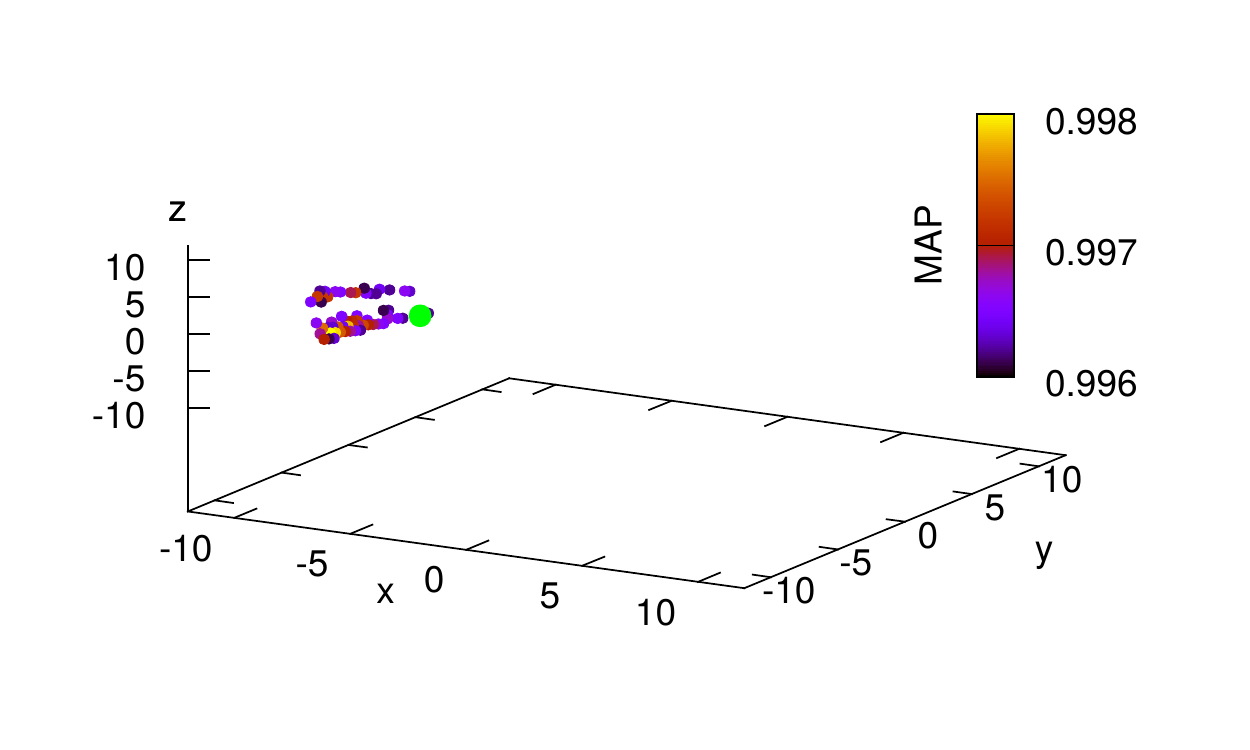}}
\caption{Tests with PCS alone. Cases $C_1$ (left panel) and $C_3$ (right panel).\label{fig:test2noerr}}
\end{figure}

In Figure~\ref{fig:test2noerr} we show the results of step 2 of the algorithm in cases $C_1$ and $C_3$, case $C_2$ being very similar to case $C_1$. Note that the MAP of a large number of translations is close to $1$, as a consequence of the poor resolving power of the PCS. A slightly better reconstruction is obtained for $C_3$. In this case the center position is close to the metal, so that there are some PCS values which are rather large, see formula~(\ref{eq:mpcs1}). To obtain these values the C terminal must remain close to the $C_3$ position for a not negligible fraction of time. As a consequence, the $\pmax$ values for positions far from the metal is slightly reduced.

\begin{figure}[ht]
\centerline{\includegraphics[width=0.9\columnwidth, trim={20 40 40 30}]{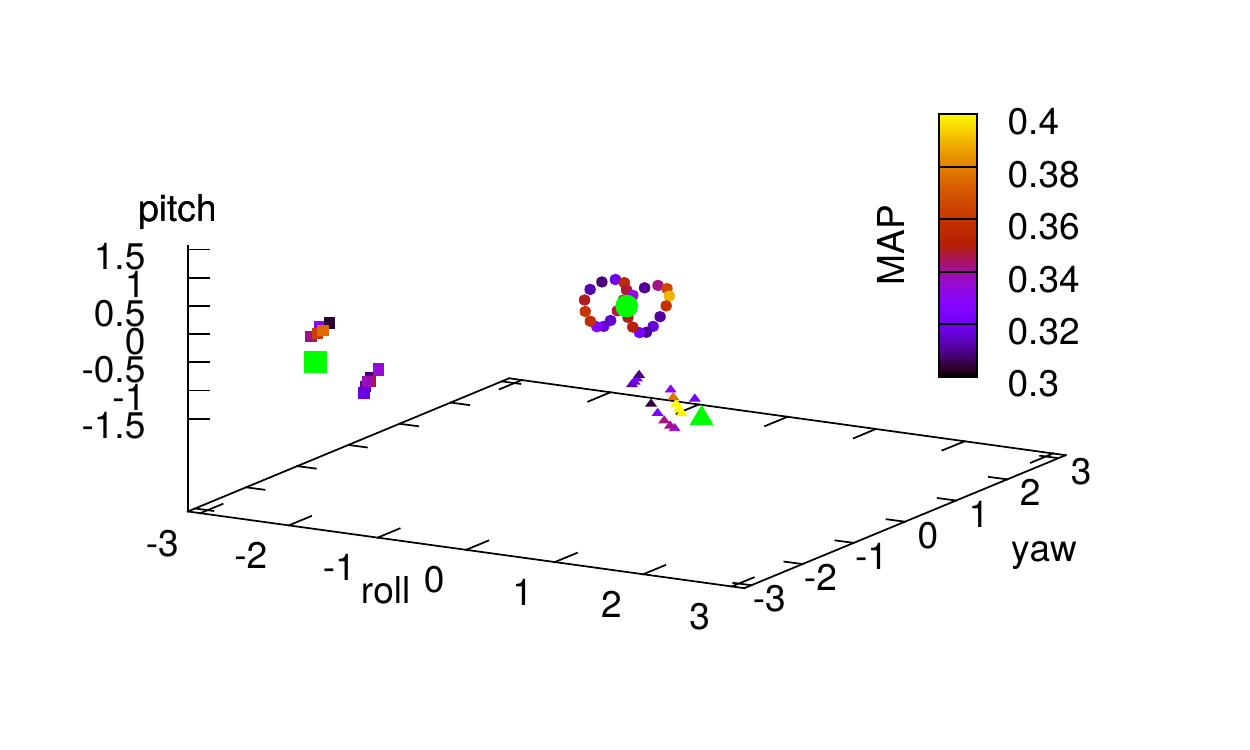}}
\caption{Tests with RDC and PCS. Cases $C_1$ (circles), $C_2$ (squares), $C_3$ (triangles).\label{fig:test3noerr}}
\end{figure}

In Figure~\ref{fig:test3noerr} we show the joint PCS+RDC case for determining an orientation. While the PCS do not resolve well the translation, they are useful to eliminate the ghost cones, thus determining the correct region of space for the orientations. The RDC and PCS formulas have the same type of symmetries, however the $E(P_j)$ vectors from (\ref{eq:eul1}) are different, so that in general the symmetries do not coincide.


\section{Tests with experimental error}
We added an uncorrelated Gaussian error to the mean measurements to take into account the experimental error. The error level was kept to $\pm{1}$ppm $\pm{10}\%$ for PCS and $\pm{1}$Hz $\pm{10}\%$ for RDC. We applied the algorithm of Subsection~\ref{ss:algorithm} and report the results in the following figures.

\begin{figure}[ht]
\centerline{\includegraphics[width=0.9\columnwidth, trim={20 40 40 30}]{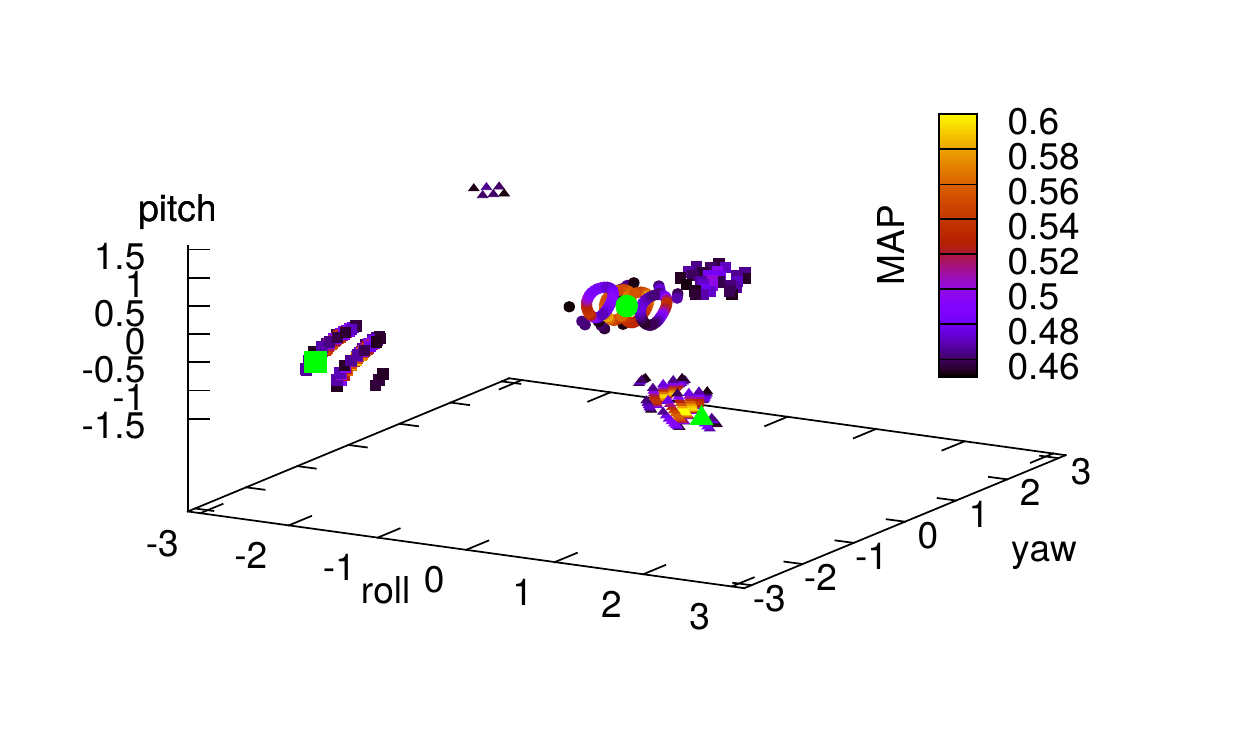}}
\caption{Tests with RDC alone. Cases $C_1$ (circles), $C_2$ (squares), $C_3$ (triangles).\label{fig:test1err}}
\end{figure}

In Figure~\ref{fig:test1err} we show the results of step 1 of the algorithm. Note that there are very few changes with respect to Figure~\ref{fig:test1noerr}. This is due to the fact that the mean RDC have only $15$ degrees of freedom, while we have $112$ measurements. The SVD algorithm implicitly fits the $15$ degrees of freedom of the mean RDC. Since the error is assumed to be Gaussian, a large number of measurements reduces the standard error of the fitted quantities. Hence the information of the RDC is well recovered even when the experimental error is considered. As explained in the previous section, coupling PCS and RDC helps removing ghost cones.

\begin{figure}[ht]
\centerline{
\includegraphics[width=0.49\columnwidth, trim={20 40 40 30}]{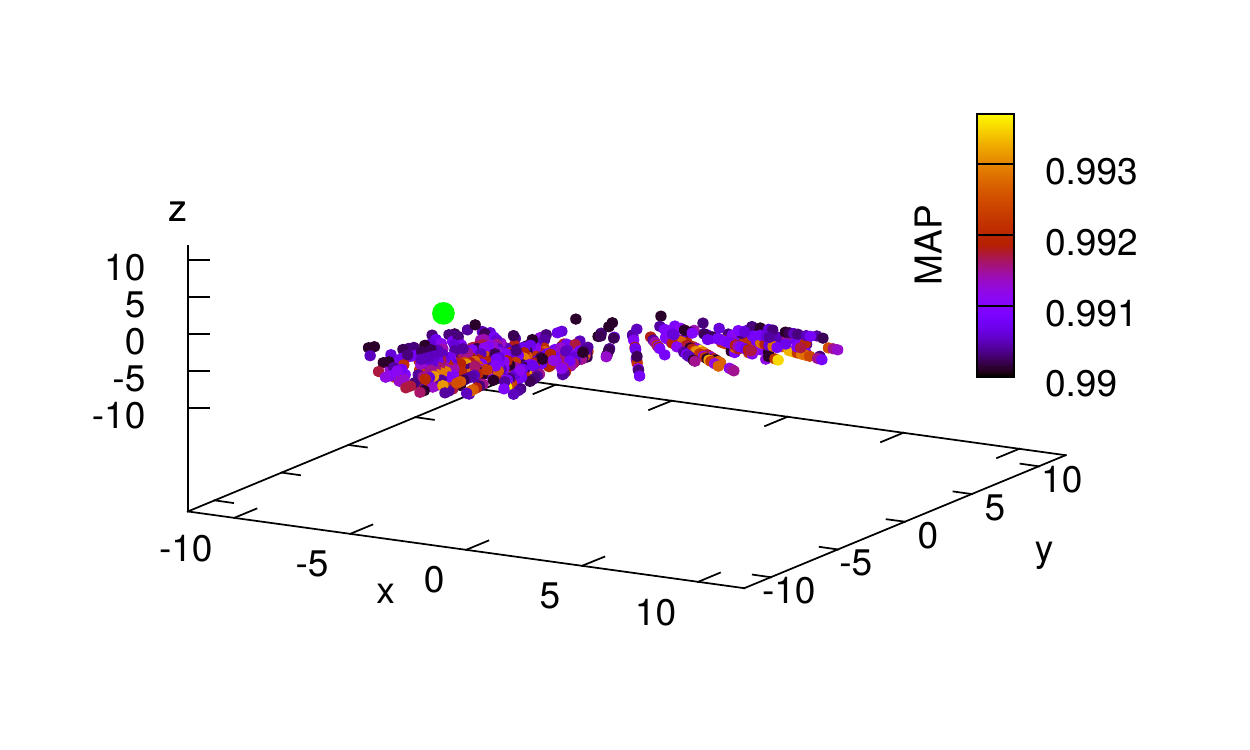}
\includegraphics[width=0.49\columnwidth, trim={20 40 40 30}]{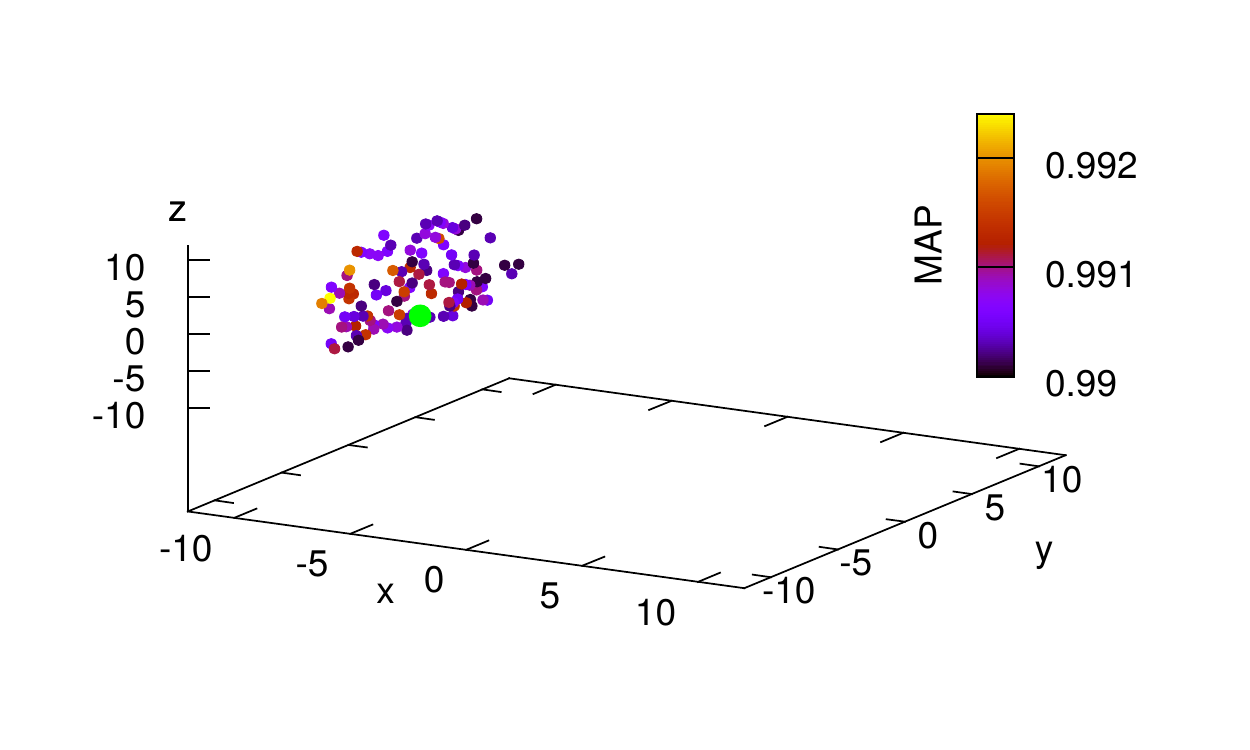}}
\caption{Tests with PCS alone. Cases $C_1$ (left panel) and $C_3$ (right panel).\label{fig:test2err}}
\end{figure}

In Figure~\ref{fig:test2err} we show the results of step 2 of the algorithm, in case $C_1$ (left panel) and $C_3$ (right panel). Here the introduction of the experimental error worsen the results. This should not be a surprise, since most of the mean PCS are under the $1$Hz threshold, so that their numerical value is destroyed by the error level which is larger. As already explained, the case $C_3$ is better because the C terminal is close to the metal.

\begin{figure}[ht]
\centerline{\includegraphics[width=0.5\columnwidth, trim={200 480 200 80}]{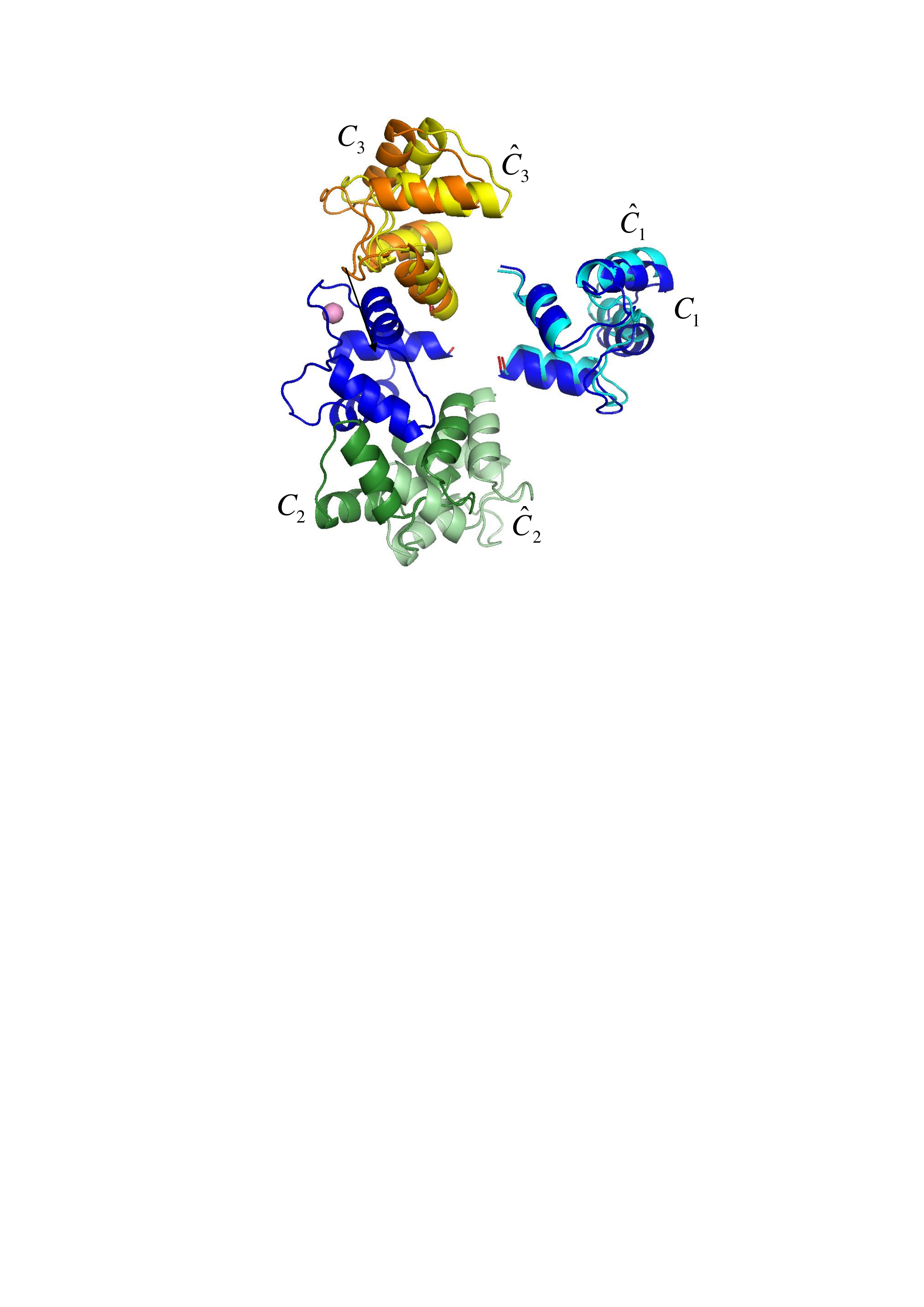}}
\caption{Tests with RDC and PCS. Metal \emph{(pink)}, N terminal \emph{(blue)}. Final reconstruction of the $C_1$, $C_2$ and $C_3$ cases. Dark colors show the positions with the largest probability, light colors show the reconstruction. \label{fig:res}}
\end{figure}

Figure~\ref{fig:res} shows the final reconstruction. Here the dark colored conformers show the positions with the largest probability, while the lightly colored conformers show the reconstructed positions $\hat C_k$.

It should be noted that the algorithm does not include a step where PCS and RDC are analysed together, if not in order to remove the ghost cones. We did attempt this step, using a local maximization technique. The results of this step are on average only marginally better than the results of the algorithm. If great care is not applied in the optimization, the error on the translation may even increase. Even in the joined RDC+PCS case, the translation is determined only by the PCS values, since there is no dependence on the RDC. However, the dependence on the translation is very weak, as it can be seen from figure~\ref{fig:test2err}, so that the variations in the MAP are largely due to the orientation of the conformer. More information on this additional attempt may be found in the supplementary material.

\section{Conclusions}
In this paper we demonstrated the ability of PCS and RDC measurements to recover a central tendency of an unknown probability distribution representing the conformational freedom of a protein made by two rigid domains connected by a flexible linker.

We made use of the MAP algorithm, extended to include PCS (and indeed any other class of measurements). The MAP algorithm determines the largest probability of a conformer in a distribution satisfying the measurements. Taken globally the MAP function is not a probability distribution but a sharp bound from above. For each position there is however an explicitly determined finite probability distribution with the MAP value as a weight for that conformer.

The RDC measurements are well able to determine any central tendency in the orientation of the conformer. Adding the PCS helps removing symmetric orientations, since the symmetries of the PCS are different from those of the RDC.

The identification of the translation is more difficult. The RDC does not depend on the translation, so we can only use PCS. However the information content of the PCS is very weak, and is further destroyed by the experimental noise. With exact data we can only approximatively determine the central tendency of the translation, see Figure~\ref{fig:test2noerr}. The situation worsen when the experimental error is added, as shown by Figure~\ref{fig:test2err}. Values of MAP larger than $0.9$ are obtained for a large fraction of the sample, and especially for positions relatively far from the metal. In other words, the conformer can sample any positions far from the metal for the $90\%$ of the time, resulting in very small partial values of the PCS. Adjusting the remaining $10\%$ of the distribution is enough to obtain the correct values of the PCS. As a consequence, the determination of the translation is less accurate.

We again stress that this is not due to the method employed. The MAP algorithm points out the extremal cases which should anyway be considered by any method trying to determine a solution. Or course there might be reasons to exclude some probability distributions, for instance using some threshold on the number of conformers or on the spread of the distribution. These are additional hypotheses which reduce the set of compatible solutions. However, since the problem is underdetermined and the real solution is not known, the reasons for removing these particular solutions should be soundly justified.

Other methods might not able to detect the MAP solutions, for instance if a predetermined sample is used for the conformers. Even if the sample is large, there might be a correlation between the rotational and translational part of the Euler transformations of the sample. In this case the identification of the orientation might bias the translation towards the correct value.

Further developments will include the case when there is more than a single central tendency in the data. Giving the results of this study, we foresee that additional information might be extracted only for the rotational part of the distribution. A possible way of overcoming this difficulty might be the inclusion of different measurements such as SAXS (Small angle X-Rays Scattering) \cite{SBK}. The NMR-SAXS integration has already been studied using the MO (Maximum Occurrence) method in \cite{BGLPPPRS}. Since the SAXS measurements depends on the global shape of the molecule this information might help in better determining the translational part of the Euler transformation.

\smallskip
\noindent{\bf Supplementary Information:} In the supporting material some additional results are presented. While not strictly necessary for the purpose of the paper, these results increase the insight on the behaviour of 
the PCS and RDC class of measurements.

\smallskip
\noindent{\it Acknowledgment:} As usual we wish to acknowledge the fruitful and long-established collaboration with the Center for Magnetic Resonance of the University of Florence. Discussions with applied scientists is always fruitful and helps gearing mathematics towards realistic problems.


\begin{thebibliography}{999}

\bibitem{ABJLLL}
Allegrozzi M, Bertini I, Janik M B L, Lee Y-M, Liu G and Luchinat C
2000 Lanthanide induced pseudocontact shifts for solution structure
refinements of macromolecules in shells up to 40 {\AA} from the
metal ion \textit{J. Amer. Chem. Soc.} \textbf{122} 4154--4161

\bibitem{AVTZEM}
Al-Hashimi H M, Valafar H, Terrel M, Zartler E R, Eidsness M K and
Prestegard J H 2000 Variation of molecular alignment as a means of
resolving orientational ambiguities in protein structures from
dipolar couplings \textit{J. Magn. Reson.} \textbf{143} 402--406

\bibitem{ABFLPRS}
Andra{\l}ojc ́W, Berlin K, Fushman D, Luchinat C, Parigi P, Ravera E, Sgheri L 2105 Information content of long-range NMR data
for the characterization of conformational heterogeneity \textit{J. Biomol. NMR} \textbf{62} 353--371

\bibitem{BCSSNT}
Berlin K, Castaneda C A, Schneidman-Duhovny D, Sali A, Nava-Tudela A
and Fushman D 2010 Recovering a Representative Conformational Ensemble
from Underdetermined Macromolecular Structural Data \textit{J. Amer. Chem. Soc.} \textbf{122} 4154--4161

\bibitem{BST}
Baber J, Szabo A and Tjandra N 2001 Analysis of Slow Interdomain
Motion of Macromolecules Using NMR Relaxation Data, \textit{J. Amer.
Chem. Soc.} \textbf{123} 3953--3959

\bibitem{BIKPB}
Barbato G, Ikura M, Kay L E, Pastor R W and Bax A 1992 Backbone
dynamics of calmodulin studied by 15N relaxation using inverse
detected two-dimensional NMR spectroscopy: the central helix is
flexible. \textit{Biochemistry} \textbf{31} 5269--5278

\bibitem{BDGKLPPPZ}
Bertini I, Del Bianco C, Gelis I, Katsaros N, Luchinat C, Parigi G,
Peana M, Provenzani A and Zoroddu M A 2004 Experimentally exploring
the conformational space sampled by domain reorientation in
calmodulin \textit{Proc. Natl. Acad. Sci. USA} \textbf{101} 6841--6

\bibitem{BGKLP}
Bertini I, Gelis I, Katsaros N, Luchinat C, Provenzani A 2003 Tuning the affinity for lanthanides of calcium binding proteins \textit{Biochemistry} \textbf{42} 8011--8021

\bibitem{BGLPPPRS}
Bertini I, Giachetti A, Luchinat C, Parigi G, Petoukhov MV, Pierattelli R,
Ravera E and Svergun DI 2010 Conformational Space of Flexible Biological Macromolecules from Average Data \textit{J. Am. Chem. Soc.} \textbf{132}
13553--13558

\bibitem{BGLPPSY}
Bertini I, Gupta K J, Luchinat C, Parigi G, Peana M, Sgheri L and
Yuan J 2007 Paramagnetism-Based NMR Restraints Provide Maximum
Allowed Probabilities for the Different Conformations of Partially
Independent Protein Domains \textit{J. Am. Chem. Soc.} \textbf{129}
12786--12794

\bibitem{CCV}
Cavalli A, Camilloni C, and Vendruscolo M 2013 Molecular dynamics simulations with replica-averaged structural restraints generate structural ensembles according to the maximum entropy principle \textit{J. Chem. Phys.} \textbf{138} 094112

\bibitem{EHP}
Evans M, Hastings N and Peacock B 2000 {\it Statistical
distributions, 3rd ed.} (New York: Wiley)

\bibitem{FSW}
Frauenfelder H, Sligar S and Wolynes P 1991 The energy 
landscapes and motions of proteins \textit{Science} \textbf{254}
1598--1603

\bibitem{GLS}
Gardner R, Longinetti M, and Sgheri L 2005 Reconstruction of
orientations of a moving protein domain from paramagnetic data
\textit{Inverse Problems} \textbf{21}, 879--898

\bibitem{HM}
Huang Y J and Montelione G T 2005 Structural biology: proteins flex
to function \textit{Nature} \textbf{438} 36--37

\bibitem{IVSGPSFG}
Ikegami T, Verdier L, Sakhaii P, Grimme S, Pescatore B, Saxena K,
Vogtherr M, Fiebig K M and Griesinger C 2004 Novel techniques for
weak alignment of proteins in solution using a chemical tags
coordinating lanthanide ions, \textit{J. Biomol. NMR} \textbf{29}
339--349.

\bibitem{J}
Jaynes E 1979 Where do we stand on maximum
entropy? In: \textit{Levine R, Tribus M, editors. The
Maximum Entropy Formalism.} MIT Press (Cambridge MA) 1--104

\bibitem{KG}
Kurland J R and McGarvey B R 1970 Isotropic NMR shifts in transition metal complexes: the calculation of the Fermi contact and pseudocontact terms, \textit{J. Magn. Reson.} \textbf{2} 286--301

\bibitem{KL}
Kullback S and Leibler R A 1951, On Information and Sufficiency \textit{Ann. Math. Statist.} \textbf{22} 79--86

\bibitem{LBDDV}
Lindorff-Larsen K, Best R B, De Pristo M A, Dobson C M and
Vendruscolo M 2005 Simultaneous determination of protein structure
and dynamics \textit{Nature} \textbf{433} 128--132

\bibitem{LLPS}
Longinetti M, Luchinat C, Parigi G and Sgheri L 2006 Efficient
determination of the most favoured orientations of protein domains
from paramagnetic {NMR} data \textit{Inverse Problems} \textbf{22},
1485--1502

\bibitem{LPS}
Longinetti M, Parigi G and Sgheri L 2002 Uniqueness and degeneracy
in the localization of rigid structural elements in paramagnetic
proteins \textit{J. Phys. A} \textbf{35} 8153--69

\bibitem{LSS}
Longinetti M, Sgheri L and Sottile F 2010 Convex Hulls of Orbits and
Orientations of a Moving Protein Domain \textit{Discrete Comput.
Geom.} \textbf{43} 54--68

\bibitem{MPPGB}
Meiler J, Prompers J J, Peti W, Griesenger C and Br\"ushweiler R
2001 Model--free approach to the dynamic interpolation of residual
dipolar coupling in globular proteins \textit{J. Am. Chem. Soc.}
\textbf{123} 6098--6107

\bibitem{PM}
Palmer A G, Massi F 2006 Characterization of the 
dynamics of biomacromolecules using rotating-frame spin 
relaxation NMR spectroscopy 
\textit{Chem. Rev.} \textbf{106} 1700--1719

\bibitem{PTVF}
Press W H, Teukolsky S A, Vetterling W T and Flannery B P 1992 {\it
Numerical recipes in C: the art of scientific computing (second
edition)} (New York: Cambridge Univ. Press)

\bibitem{RB}
Ramirez B E and Bax A 1998 Modulation of the Alignment Tensor of
Macromolecules Dissolved in a Dilute Liquid Crystalline Medium
\textit{J. Am. Chem. Soc.} \textbf{120} 9106--9107

\bibitem{RMCDJLBBJCG}
Rodr韌uez-Casta\~neda F, Maestre-Mart\'inez M, Coudevylle N, Dimova
K, Junge H, Lipstein N, Lee D, Becker S, Brose N, Jahn O, Carlomagno
T and Griesinger C 2010 Modular architecture of Munc13/calmodulin
complexes: dual regulation by $Ca^{2+}$ and possible function in
short-term synaptic plasticity \textit{EMBO Journal} \textbf{29}
680--691

\bibitem{RSPL}
Ravera E, Sgheri L, Parigi G, Luchinat C 2015 A critical assessment of methods to recover information from averaged data, \textit{Phys. Chem. Chem. Phys.} {doi: 10.1039/C5CP04077A}

\bibitem{SB}
Salmon L and Blackledge M 2015 Investigating protein conformational energy landscapes and atomic resolution dynamics from NMR dipolar couplings: a review, \textit{Rep. Prog. Phys.} \textbf{78} 126601

\bibitem{S1}
Sgheri L 2010 Joining RDC data from flexible protein domains \textit{Inverse Problems},
\textbf{26} 115021

\bibitem{S2}
Sgheri L 2010 Conformational freedom of proteins and the maximal
probability of sets of orientations \textit{Inverse Problems},
\textbf{26} 035003

\bibitem{SBK}
Svergun D I, Barberato C, Koch M H J 1995 CRYSOL -- a Program to Evaluate X-ray Solution Scattering of Biological Macromolecules from Atomic Coordinates, \textit{J. App. Cryst.} \textbf{28} 768--773

\bibitem{SE}
Selkoe D J 2003 Folding proteins in fatal ways \textit{Nature}
\textbf{426} 900--904

\bibitem{SW}
Schneider R and Weil W 1992 {\it Integralgeometrie} (Stuttgart: B G
Teubner)

\bibitem{TAKP}
Tolman J R, Al-Hashimi H M, Kay H M, and Prestegard J H 2001 Dynamic
Analysis of Residual Dipolar Coupling Data for Proteins \textit{J.
Am. Chem. Soc.} \textbf{123} 1416--1424

\bibitem{TFKP}
Tolman J R, Flanagan J M, Kennedy M A, and Prestegard J H 1995
Nuclear magnetic dipole interactions in field-oriented proteins:
information for structure determination in solution \textit{Proc.
Natl. Acad. Sci. USA} \textbf{92} 9279--83

\bibitem{V}
Van Kampen N G 1981 {\it Stochastic processes in physics and
chemistry} (New York: North-Holland)

\bibitem{W}
Weisstein E W 1998 {\it CRC Concise Encyclopedia of Mathematics}
(Boca Raton: Chapman and Hall)
\end{thebibliography}
\end{document}